\newtheorem{theorem}{Theorem}
\newtheorem{lemma}{Lemma}
\title{Fine-Grained Complexity Analysis of Multi-Agent Path Finding on 2D Grids}
\author{
    Tzvika Geft%
}
\begin{document}

\newcommand{\poslit}{\ensuremath{R^+}}
\newcommand{\neglit}{\ensuremath{R^-}}
\newcommand{\true}{\textsc{true}}
\newcommand{\false}{\textsc{false}}
\newcommand{\litrob}{\ensuremath{r^j}}
\newcommand{\src}[1] {\ensuremath{s(#1)}}
\newcommand{\trg}[1] {\ensuremath{t(#1)}}
\newcommand{\pth}[1] {\ensuremath{\pi_{#1}}}
\newcommand{\pthdef}[2] {\ensuremath{\pi_{#1}(#2)}}
\newcommand{\gadg}{\ensuremath{\Xi}}
\newcommand{\spac}{\ensuremath{\delta}-spacious}

\newcommand{\din}{\delta_\textrm{in}}
\newcommand{\dout}{\delta_\textrm{out}}

\newcommand{\bef}{B}
\newcommand{\aft}{A}

\let\ov\overline
\newcommand{\clause}[3]{\ensuremath{(#1 \lor #2 \lor #3)}}
\newcommand{\lits}{\ensuremath{\ell_i, \ell_j, \ell_k}}
\newcommand{\cl}{\clause{\ell_i}{\ell_j}{\ell_k}}

\newcommand{\bestcost}{\mathop{d^*}}

\newcommand{\set}[1]{\ensuremath{\{#1\}}}

\definecolor{gray}{rgb}{0.35,0.35,0.35}
\definecolor{blue}{rgb}{0,0,1}
\definecolor{red}{rgb}{1,0,0}
\definecolor{orange}{rgb}{0.75, 0.4, 0}
\definecolor{green}{rgb}{0.0, 0.5, 0.0}
\newcommand{\aviv}[1]{{\color{green}\textbf{Aviv: }\sf#1}}
\newcommand{\tzvika}[1]{{\color{blue}\textbf{Tzvika: }\sf#1}}
\newcommand{\ken}[1]{{\color{orange}\textbf{Ken: }\sf#1}}
\newcommand{\michal}[1]{{\color{gray}\textbf{Michal: }\sf#1}}

\def\P{\mathcal{P}} \def\C{\mathcal{C}} \def\H{\mathcal{H}}
\def\F{\mathcal{F}} \def\U{\mathcal{U}} \def\L{\mathcal{L}}
\def\O{\mathcal{O}} \def\I{\mathcal{I}} \def\S{\mathcal{S}}
\def\G{\mathcal{G}} \def\Q{\mathcal{Q}} \def\I{\mathcal{I}}
\def\T{\mathcal{T}} \def\L{\mathcal{L}} \def\N{\mathcal{N}}
\def\V{\mathcal{V}} \def\B{\mathcal{B}} \def\D{\mathcal{D}}
\def\W{\mathcal{W}} \def\R{\mathcal{R}} \def\M{\mathcal{M}}
\def\X{\mathcal{X}} \def\A{\mathcal{A}} \def\Y{\mathcal{Y}}
\def\L{\mathcal{L}}

\def\dS{\mathbb{S}} \def\dT{\mathbb{T}} \def\dC{\mathbb{C}}
\def\dG{\mathbb{G}} \def\dD{\mathbb{D}} \def\dV{\mathbb{V}}
\def\dH{\mathbb{H}} \def\dN{\mathbb{N}} \def\dE{\mathbb{E}}
\def\dR{\mathbb{R}} \def\dM{\mathbb{M}} \def\dm{\mathbb{m}}
\def\dB{\mathbb{B}} \def\dI{\mathbb{I}} \def\dM{\mathbb{M}}
\def\dZ{\mathbb{Z}}

\renewcommand{\trg}[1] {\ensuremath{g(#1)}}

\maketitle

\begin{abstract}
Multi-Agent Path Finding (MAPF) is a fundamental motion coordination problem arising in multi-agent systems with a wide range of applications.
The problem's intractability has led to extensive research on improving the scalability of solvers for it.
Since optimal solvers can struggle to scale, a major challenge that arises is understanding what makes MAPF hard. %
We tackle this challenge through a fine-grained complexity analysis of time-optimal MAPF on 2D grids, thereby closing two gaps and identifying a new tractability frontier.
First, we show that $2$-colored MAPF, i.e., where the agents are divided into two teams, each with its own set of targets, remains NP-hard.
Second, for the flowtime objective (also called sum-of-costs), we show that it remains NP-hard to find a solution in which agents have an individually optimal cost, which we call an \emph{individually optimal solution}.
The previously tightest results for these MAPF variants are for (non-grid) planar graphs.
We use a single hardness construction that replaces, strengthens, and unifies previous proofs.
We believe that it is also simpler than previous proofs for the planar case as it employs minimal gadgets that enable its full visualization in one figure.
Finally, for the flowtime objective, we establish a tractability frontier based on the number of directions agents can move in.
Namely, we complement our hardness result, which holds for three directions, with an efficient algorithm for finding an individually optimal solution if only two directions are allowed.
This result sheds new light on the structure of optimal solutions, which may help guide algorithm design for the general problem.
\end{abstract}

\section{Introduction}

Multi-Agent Path Finding (MAPF) is the problem of planning the collision-free motion of agents that operate on a graph (see formal problem definition below).
MAPF has a wide range of applications in transportation, logistics, and beyond, from warehouse automation~\cite{app:warehouses} through train scheduling~\cite{DBLP:conf/socs/AtzmonDR19}, to pipe routing~\cite{DBLP:conf/socs/BelovDBHKW20}.
The increasing proliferation of multi-agent systems has led to MAPF being a highly active research area in AI and robotics, where MAPF serves as a discrete abstraction of multi-robot motion planning.  

In this paper we study the following time optimization objectives for MAPF:
\begin{itemize}
    \item \emph{Makespan}–where we wish to minimize the time at which
the last agent reaches its goal.

    \item \emph{Flowtime} (also known as \emph{sum-of-costs})– where we wish to minimize the total time it takes the agents to reach their goals, i.e., the sum of all the agents' individual arrival times.
\end{itemize}
Arguably, these objectives are the most prevalent for MAPF.
Therefore, when using the term MAPF (or \emph{time-optimal} MAPF) in the sequel, we refer to MAPF with either of these objectives, unless stated otherwise.

Since finding optimal solutions for MAPF is intractable (see review below), the problem has been attacked along many fronts, leading to dozens of diverse algorithms (which we also call \emph{solvers}) being proposed.
Optimal MAPF algorithms include the popular Conflict-Based Search (CBS)~\cite{DBLP:journals/ai/SharonSFS15}, which has seen multiple improvements over the years (see, e.g., ~\cite{DBLP:conf/aips/FelnerLB00KK18, DBLP:conf/aaai/LiHS0K19}).
Other algorithms include Branch-and-Cut-and-Price (BCP)~\cite{BCP}, which is based on mixed-integer programming, Lazy CBS~\cite{LazyCBS}, which uses lazy-constraint programming, and the satisfiability modulo theory-based SMT-CBS~\cite{SMT-CBS}.

Over time, the development of improved algorithms has allowed effectively solving a wide range of increasingly larger instances.
Nevertheless, despite significant progress, optimal solvers can still struggle to scale.
At the same time, the performance of algorithms is not well-understood:
It has been observed that instances that do not exhibit notable differences can result in very different running times~\cite{DBLP:conf/atal/SalzmanS20}.
Furthermore, it has been challenging to empirically establish an algorithm dominating all others~\cite{DBLP:conf/socs/FelnerSSBGSSWS17}, leading to research on algorithm selection~\cite{DBLP:conf/aips/KaduriBS20} and to the search for parameters governing algorithms' performance~\cite{DBLP:conf/atal/EwingRKSA22}.

Alongside the incomplete understanding of MAPF algorithms' performance, a more fundamental question has been raised on what makes the problem itself hard~\cite{DBLP:conf/socs/FelnerSSBGSSWS17, gordon2021revisiting}. 
Gaining a deeper understanding of the problem's computational complexity has been recently highlighted as a major open challenge~\cite{DBLP:conf/atal/SalzmanS20}.
A major driving force in this work is to deepen such understanding, with the ultimate goal of applying it to obtain more scalable algorithms.
To this end, we provide a fine-grained complexity analysis of MAPF, %
as we outline next.

\begin{table*}[t]
\begin{tabular}{lccccccc}
\midrule
   Paper     & 2D grids   & \multicolumn{1}{c}{\begin{tabular}[c]{@{}c@{}}$\Delta =   0$ \\      (flowtime only)\end{tabular}} &\multicolumn{1}{c}{\begin{tabular}[c]{@{}c@{}} $2$-colored \\      MAPF\end{tabular}}& 3 directions & \# of agents     & Remark           &  \\
\midrule
\cite{YuPlanar}      &            & \checkmark                 & \checkmark     & n/a          & $O(m^2)$          &                  &  \\
\cite{DBLP:journals/siamcomp/DemaineFKMS19} & \checkmark & n/a                        & 3-colored      &              & $O(nm)$           & Makespan only    &  \\
 (Banfi et al. 2017) %
& \checkmark &          &                &              & $O(nm^2 + n^2m)$  &  &  \\
\midrule
\textbf{this work}    & \textbf{\checkmark} & \textbf{\checkmark}  & \textbf{\checkmark  }   & \textbf{\checkmark }  & \textbf{$m$ }              &                  & 
\end{tabular}
\caption{Comparison of previous hardness results for planar graphs. %
 The "\# of agents" column shows the number of agents used in the MAPF instance output by each reduction in terms of the size of the input 3-SAT formula, where $n$ and $m$ are the number of variables and clauses, respectively.
 For completeness, we note here that proofs for the non-planar case (see references in main text) generally also show hardness for $2$-colored MAPF and for the decision question of whether $\Delta=0$. Such proofs may be simpler at the expense of not being applicable to planar environments.
 Remark: The number of agents and directions used in each construction are worst case based on the details given in each proof. They may be lower following a more careful analysis of the respective proofs.}
\label{tab}
\end{table*}

\subsection{Previous work and remaining questions} %
The complexity of time-optimal MAPF has been extensively studied. Earlier hardness results apply mostly to non-planar graphs (i.e., graphs that cannot be drawn in the plane such that edges do not cross)~\cite{DBLP:conf/aaai/Surynek10, YuGraphs, DBLP:conf/aaai/MaTSKK16}.
A significant downside of such proofs is that they do not hold (i.e., cannot be easily adapted) for the more typical use cases of MAPF, which concerns planar environments.
Benchmarks and empirical performance evaluations typically revolve around the even more special case of 2D grids, which serve as an abstraction for structured environments such as warehouses.
Consequently, a major theme of subsequent complexity research has been to obtain results for more grid-like environments.
Complexity analysis for restricted cases is valuable as it can indicate whether the special structure of such cases can be exploited to obtain improved algorithms.
We therefore echo this theme and focus on the planar case.

Indeed, later hardness results have been presented for planar graphs~\cite{YuPlanar}, finally culminating with hardness for 2D grids in the works of~\cite{DBLP:journals/ral/BanfiBA17} and \cite{DBLP:journals/siamcomp/DemaineFKMS19} (the latter applies only to makespan minimization).
As for recent results, a highly restricted variant where agents must simply follow given straight paths on a 2D grid, without an optimization objective, has been shown to be NP-hard~\cite{fixed-path}.
An even newer result (which we became of aware of during the review process) shows that makespan minimization on 2D grids is NP-hard even for a fixed makespan value and fixed-parameter tractable when parameterized by the number of agents, assuming no obstacles~\cite{MAPF-FPT}.

Nevertheless, some open questions remain within the prevalent MAPF formulation for 2D grids, which we now highlight.

\paragraph{Colored MAPF.}
In practice, it may be more natural to consider the agents as being divided into teams rather than viewing each agent as unique.
In the $k$-colored MAPF variant~\cite{coloredMAPF, DBLP:journals/ijrr/SoloveyH14}, there are $k$ teams of agents, each with its own set of targets.
A valid solution then involves having each target be reached by any of the agents in the team.
$k$-colored MAPF generalizes standard (labeled) MAPF, where each agent may be viewed as a singleton team.
On the other end of the spectrum, we have \emph{unlabeled} or \emph{anonymous} MAPF, where all the agents are viewed as being part of one team and a target may be assigned to any agent.
The unlabeled case can be efficiently solved by a reduction to network flow~\cite{YuUnlabeledMAPF}.
Therefore a natural question is whether efficient solutions exist for other small $k$ values.
Previously, \cite{YuPlanar} has shown that on planar graphs the problem is NP-hard already for $k=2$.
On 2D grids, only the NP-hardness for $k=3$ for makespan has been shown~\cite{DBLP:journals/siamcomp/DemaineFKMS19}.

Therefore, our first question \textbf{(Q1)} is whether $2$-colored MAPF on 2D grids is NP-hard.
We point out that hardness has been conjectured but ultimately left unresolved~\cite{DBLP:journals/ral/BanfiBA17}.

\begin{figure}[t]
    \centering    \includegraphics[width=0.65\linewidth]{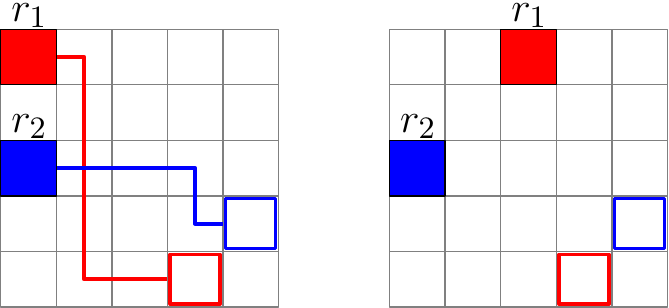}
    \caption{Two MAPF instances, where start cells are filled and target cells are unfilled.
    Left: An instance having an individually optimal solution, for which the paths are shown.
    Right: An instance that does not have such a solution, as any combination of individually optimal paths for $r_1$ and $r_2$ would result in a collision.}
    \label{fig:example_instances}
\end{figure}

\paragraph{Finding agent-wise individually optimal solutions.}
For the flowtime objective, we also investigate the complexity of finding a solution whose cost equals the lower bound cost.
A \emph{lower bound cost} for a MAPF instance is defined as the sum of costs of all agents' individually optimal paths, which are optimal paths when all other agents are ignored.
We call such a solution an \emph{individually optimal solution}.
In such a solution every agent moves at every time step along some shortest path to its goal until reaching it.
See \Cref{fig:example_instances} for examples.

Alternatively stated, if we let $\Delta$ denote the difference between the cost of the optimal solution and the lower bound cost of a given MAPF instance, the problem is to decide whether $\Delta=0$.
Note that this problem is easier than optimally solving MAPF, since finding an optimal solution also solves the former problem.

Our motivation stems from the inherent connection between $\Delta$ and the amount of computation needed to optimally solve MAPF.
In terms of search-based solvers, such as CBS, the size of the high-level search tree may grow exponentially with $\Delta$. %
This fact stems from the typical flow of optimal MAPF solvers, which begins by determining the existence of an individually optimal solution and only proceeds to look for more expensive solutions if no such solution exists.
Therefore, it seems natural to ask whether the initial step of deciding whether $\Delta=0$ can be done more efficiently.

The answer for planar graphs has been negative since instances in the hardness proof of~\cite{YuPlanar} are yes-instances if and only if $\Delta=0$. %
However, we cannot say the same for the proof for 2D grids~\cite{DBLP:journals/ral/BanfiBA17} since their construction does not bound $\Delta$ in yes-instances.\footnote{In yes-instances in their proof the clause agents may have to take detours, resulting in
paths longer than their individually optimal paths. Hence the answer to the decision question in their proof does not indicate whether $\Delta=0$.}

We therefore pose the following question \textbf{(Q2)}:
For MAPF with the flowtime objective on 2D grids, is it NP-hard to determine the existence of an individually optimal solution? %

Note we can define an equivalent question for the makespan objective, though its answer is immediate since for makespan any existing construction can be easily modified to show hardness for this case.\footnote{Suppose that there is a hardness construction showing that deciding whether a solution with makespan better than $\M$ exists. Then we may modify the construction by adding a single agent whose target is $\M$ units away in a separate part of the construction from the rest of the agents. The lower bound for makespan becomes $\M$ due to the new agent.}

\subsection{Contribution}
We provide a new NP-hardness proof for MAPF on 2D grids that settles the hardness of the restricted variants in Q1 and Q2. %
Namely, $2$-colored MAPF and finding an (agent-wise) individually optimal solution remain NP-hard on 2D grids.
Our results are tight in the following sense:
For Q1 the case of $k=1$ (i.e., anonymous MAPF) is efficiently solvable.
Similarly, Q2 considers the smallest value for $\Delta$.
Thus, our proof replaces and unifies multiple previous proofs, including those of~\cite{YuPlanar, DBLP:journals/ral/BanfiBA17, DBLP:journals/siamcomp/DemaineFKMS19}, while obtaining stronger results that close two gaps in MAPF's complexity.

Moreover, our proof is simpler in terms of using fewer agents than previous proofs for the planar case;
see \Cref{tab} for a comparison with previous proofs.
The relatively low number of agents in our reduction stems from the fact that we only use one type of agent, as opposed to multiple types used in previous reductions.\footnote{By "types of agents" we refer to agents having a distinguished role in the proof, e.g., agents representing SAT clauses versus agents representing variables.}
Consequently, this makes it (arguably) easier to visualize our complete construction, with all its constituent components, on the 2D grid.%

Finally, we establish a new tractability frontier for MAPF based on the number of cardinal directions agents can move along.
Our proof uses only three such directions, as opposed to previous 2D grid proofs, which use all four directions.
We complement the hardness results by showing that if we allow only two directions of motion, we can efficiently find an individually optimal solution for flowtime (if such a 
solution exists).
While this positive result is quite restricted, it sheds new light on the structure of optimal solutions.
Namely, we prove that yes-instances have a canonical solution that can be found by a simple prioritized planning-based algorithm.
Overall, by bridging complexity analysis and algorithm design we hope to inspire enhanced reasoning techniques that can further improve MAPF solvers.

\section{Problem Definition}\label{sec:definitions}
In Multi-Agent Path Finding (MAPF)~\cite{DBLP:conf/socs/SternSFK0WLA0KB19} we are given a graph~${G=(V,E)}$ and a set of agents~${\left\{r_1, \ldots, r_N\right\}}$, where each agent $r_i$ has a start and a goal vertex, denoted by $(s_i,g_i)$ respectively, where ~${s_i,g_i \in V}$.
Time is discretized and at each time step an agent can either \emph{wait} at its current vertex or \emph{move} across an edge to an adjacent vertex.
A \emph{timed path} (or simply \emph{path)} is a sequence $(v_1,\ldots, v_t)$ of vertices representing the current location of an agent at each time step, i.e., for each $i<t$, $v_i$ and $v_{i+1}$ are either adjacent or the same vertex.
A \emph{feasible} or \emph{collision-free} solution is a set of paths~${\mathcal{P}=\left\{p_1,p_2,...,p_N\right\}}$ such that $p_i$ is a timed path for agent $r_i$ from $s_i$ to $g_i$, and there are no conflicts between any two paths in $\mathcal{P}$.
Namely, the following types of conflicts do not occur:
A \emph{vertex-conflict}, in which two agents occupy the same vertex at the same time step.
An \emph{edge-conflict}, in which two agents traverse the same edge in opposite directions at the same time step.

An \emph{optimal} solution is a set of paths~$\mathcal{P}$ that also optimizes some objective function.
For time-optimal MAPF formulations, the most common objectives are the following:
\textit{Makespan}: the maximum number of time steps needed for all agents to reach their targets, i.e., $\max_{i=1,\ldots,N} |p_i|$.
\textit{Flowtime} (also known as \emph{sum-of-costs}): the sum of time steps needed by all agents to reach their targets, i.e., $\sum_{i=1,\ldots,N} |p_i|$.

\paragraph{k-colored MAPF.} In $k$-colored MAPF the agents are divided into $k$ teams and each team is associated with a number of targets that is identical to the number of agents in the team.
A feasible solution brings the agents in each team to the team's respective targets, i.e., each target can be occupied by any agent in the team. Note that $N$-colored MAPF, i.e., each team consists of a single agent, is the same as the original definition of MAPF above. The $1$-colored variant is called \textit{unlabeled} or \textit{anonymous} MAPF.
Colored MAPF has also been studied under the name Target Assignment and Path Finding (TAPF)~\cite{TAPF}.

\paragraph{Other conflict types.}
One may define other types of conflicts that needed to be avoided in a feasible solution:
A \emph{following conflict} occurs when an agent attempts to move to a vertex that is being vacated by another agent at the same time step.
A \emph{cycle conflict} occurs when multiple agents $r_1,\ldots,r_k$ are involved in a following conflict such that $r_{i (\mod k)}$ attempts to move to the current vertex of $r_{i+1 (\mod k)}$, i.e., a rotating cycle of agents is formed.

\section{Hardness Results}
In this section we present our hardness results, which are as follows:

\begin{theorem}
For the makespan and flowtime objectives, MAPF remains NP-complete on 2D grids with holes even when the agents are restricted to move along $3$ directions and for the following variants:
    \begin{enumerate}[label=(\roman*)]
    \item
    $2$-colored MAPF, i.e., when there are two groups of agents, where agents within each group are considered interchangeable.
    \item
    Deciding whether there is an individually optimal solution for flowtime.
    \end{enumerate}
    \label{thm:hardness}
\end{theorem}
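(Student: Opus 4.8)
The plan is to prove membership in NP first and then establish hardness through a single polynomial-time reduction from a planar variant of 3-SAT that serves both variants at once. Membership is routine: a feasible solution of polynomially bounded length is a certificate (the makespan on a grid with a bounded number of holes and $N$ agents is polynomial), and for variant (ii) one can compute the lower bound cost $\sum_i \mathrm{dist}(s_i,g_i)$ in polynomial time and check whether the certificate attains it, while for variant (i) one additionally checks that the solution matches agents to their team's targets. Together with NP-hardness this yields NP-completeness.

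For hardness I would reduce from a planar form of 3-SAT (e.g., rectilinear planar 3-SAT, or planar monotone 3-SAT), whose instances come with a crossing-free rectilinear embedding, so the gadgets can be laid out on the grid with no crossover gadgets. The construction should use \emph{one type of agent} throughout, built from three ingredients: (a) \emph{wires} — thin corridors bounded by holes along which an agent travels on its unique shortest path, with all turns realized using only the three permitted directions (say up, down, and right), which fixes the global orientation of the layout; (b) a \emph{variable gadget} — a small junction where one designated agent has \emph{exactly two} shortest paths of equal length, one threading the "true" wire and one the "false" wire feeding each of its literals; and (c) a \emph{clause gadget} — a junction where the three literal-carrying wires of a clause meet, engineered so that a routing in which all three of those agents stay on individually shortest paths exists if and only if at least one literal is set so that its wire avoids a shared bottleneck cell, i.e., iff the clause is satisfied. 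Wire lengths are padded so that the three "false" sub-paths would force their agents onto a common cell at the same time step, so at least one of them must wait or detour and thereby lengthen its path.

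Both variants then follow from essentially the same instance. For (ii), the instance admits an individually optimal solution for flowtime exactly when every agent can be routed on a shortest path without conflicts, which by construction is exactly satisfiability; padding with dummy agents (as in the paper's footnote) makes the same equivalence hold for the makespan lower bound, giving the makespan claim. For (i), I would partition the agents into two teams so that within-team interchangeability is useless: arrange each team's starts and targets so that a standard exchange argument on monotone grid paths forces any time-optimal colored solution to use the identity assignment between a team's agents and its targets (alternatively, route same-team agents through disjoint regions). The $2$-colored instance then behaves exactly like the labeled one and the reduction carries over; since anonymous ($1$-colored) MAPF is polynomial, two colors is the correct threshold. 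Choosing which agents go in which of the two teams is where the improvement over the prior $3$-colored result must come from, so the gadgets must be designed with this two-way split in mind from the start.

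The main obstacle I expect is engineering the clause gadget so that it simultaneously (1) is \emph{correct} — a collision-free shortest-path routing through it exists iff the clause is satisfied, with no accidental shortest-path routing that "cheats" an unsatisfied clause; (2) embeds on the grid in bounded size using only three movement directions; and (3) is robust to the extra freedom of $2$-coloring. A secondary difficulty is global timing and parity bookkeeping: all wire lengths and offsets must be consistent so that the intended collisions occur at exactly the same time step across the whole formula, and a detour forced in a no-instance must provably increase the objective by a strictly positive amount rather than being absorbed for free. I would handle the timing by inserting length-adjusting detour segments on the wires and verifying parity locally at each variable and clause gadget.
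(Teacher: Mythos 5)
Your high-level plan (reduce from rectilinear/monotone planar 3-SAT, lay the incidence graph out as corridors, use one reduction for all variants) matches the paper's, but your gadget architecture is the classical one — a \emph{variable agent} that chooses between a ``true'' and a ``false'' shortest path, and a \emph{clause gadget} where three literal wires meet at a timed bottleneck — and the central piece of it does not work as described. If the three false wires of a clause are padded so that their agents reach a \emph{single shared cell at the same time step}, then a vertex conflict already occurs when only \emph{two} of the three literals are false; your gadget would therefore encode ``at most one literal is false'' rather than ``at least one literal is true.'' To fix this you would need a bottleneck that admits two synchronized agents but not three (e.g., a passage with exactly a two-step admissible time window on shortest paths), and you would then have to solve a global system of timing constraints: each variable agent traverses the bottlenecks of \emph{every} clause containing its chosen literal, in a fixed order along its wire, and all of these visit times must simultaneously align with the windows of the other two agents of each clause. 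You correctly flag this ``timing and parity bookkeeping'' as the main obstacle, but it is not a detail — it is the heart of the proof, and it is left unresolved. The 2-coloring step has a similar gap: forcing the identity assignment by an exchange argument is asserted, not constructed, and it is unclear how to split a set of variable agents into two teams so that interchangeability is provably harmless.

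The paper avoids all of this by inverting the roles: there is \emph{one agent per clause} (no variable agents at all), and the truth assignment is encoded not by which branch an agent picks at a junction but by the \emph{direction of traversal} of a long vertical variable channel. Positive-clause agents move only down and right, negative-clause agents only up and right (three directions total); each clause agent must thread one of its variables' channels, and a channel of length $L$ (the maximum time needed to reach any channel) cannot be traversed in both directions without someone waiting, so an individually optimal solution induces a consistent assignment. Collision-freeness among same-side agents is obtained for free by arranging start cells at pairwise distinct Manhattan distances from a common opening cell, so no per-clause timing synchronization is ever needed; and the positive/negative dichotomy gives the two teams of the $2$-colored variant directly, with interchangeability harmless because all of a team's targets are stacked past the opening. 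Makespan hardness then follows by padding all individually optimal path lengths to a common value $d$. In short, your proposal would require you to actually solve the synchronization and two-versus-three bottleneck problems that the paper's clause-agent/channel-direction design is specifically built to sidestep.
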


\begin{figure}[t]
    \centering    \includegraphics[width=0.64\linewidth]{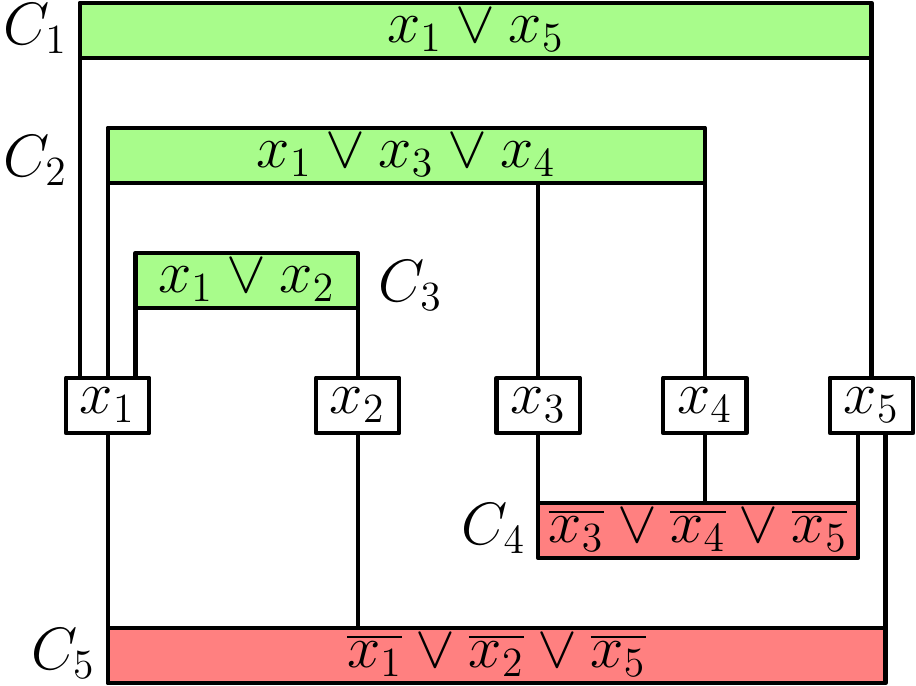}
    \caption{\textsc{A Monotone Planar 3-SAT} instance in a rectilinear embedding. } %
    \label{fig:mpsat_example}
\end{figure}

Since membership in NP is straightforward~\cite{YuGraphs}, we only show hardness.
The base problem for our reduction is \textsc{Monotone Planar 3-SAT}, which we now define.

\paragraph{\textsc{Monotone Planar 3-SAT}.}
Let $\phi = \bigwedge C_i$ be a \textsc{3-SAT} formula having $n$ variables and $m$ clauses, where each clause $C_i$ is the disjunction of at most three literals, each of which is either a variable or its negation.
We now define the two special elements of this version of 3-SAT.

A clause is \emph{positive (resp. negative)} if it contains only positive (resp. negative) literals.
An instance of \textsc{3-SAT} is \emph{monotone} if it only has positive or negative clauses.

For the planarity element, we consider the bipartite graph $G_\phi$ that contains a vertex for each variable and for each clause, and has an edge between a variable vertex and a clause vertex if and only if the variable appears in the clause.
In \textsc{Planar 3-SAT}, which remains NP-complete~\cite{DBLP:journals/siamcomp/Lichtenstein82}, we require $G_\phi$ to be planar.

The graph $G_\phi$ of a \textsc{Planar 3-SAT} instance can be drawn in a \textit{rectilinear} embedding as follows~\cite{DBLP:journals/siamdm/KnuthR92}:
All vertices are drawn as unit-height rectangles, with all the variable-vertex rectangles centered on a fixed horizontal strip called the \textit{variable row}.
Every edge is a vertical line segment, which we call a \textit{leg}, that does not cross any rectangles, as shown in Figure~\ref{fig:mpsat_example}. %
\textsc{Planar 3-SAT} remains NP-complete when $G_\phi$ is given as a rectilinear embedding.

The 3-SAT version we use combines planarity with monotonicity as follows:
\textsc{Monotone Planar 3-SAT} is a restriction of \textsc{Planar 3-SAT} to monotone instances in which the rectilinear embedding of $G_\phi$ has all positive clause-vertices above the variable row and all negative clause-vertices below it, as illustrated in Figure~\ref{fig:mpsat_example}.
\textsc{Monotone Planar 3-SAT} remains NP-complete~\cite{DBLP:conf/cocoon/BergK10}.

\paragraph{Nesting level and root clauses.}
We will use the following hierarchical relationship between clauses, which is induced by a fixed rectilinear embedding of $G_\phi$:
Let $C$ and $C'$ be two clauses that are on the same side of the variable row.
We say that $C$ \emph{encloses} $C'$ if one can draw a vertical line segment $s$ connecting $C$ and $C'$, and also $C$ is vertically further from the variable row than $C'$.
$C$ is called the \emph{parent} of $C'$ if it encloses $C'$ and furthermore we can draw $s$ without crossing any clauses, i.e., $C$ directly encloses $C'$.
In this case, we call $C'$ a \emph{child} of $C$.
A clause that does not have a parent is called a \emph{root} clause.
For example, in Figure~\ref{fig:mpsat_example} $C_1$ encloses $C_3$, but is only the parent of $C_2$.

\begin{figure}[t]
    \centering    \includegraphics[width=0.66\linewidth]{}
    \caption{The instance $M$ in our reduction, which corresponds to the 3-SAT formula shown in \Cref{fig:mpsat_example} (agent $c_i$ corresponds to the clause $C_i$).
    Positive (resp. negative) agents and their targets are green (resp. red).
    The two colors represent the two teams in the corresponding $2$-colored MAPF instance.
    Obstacle cells are dark gray.
    The lengths of vertical corridors in the figure are slightly smaller than required -- see text for details.}
    \label{fig:construction}
\end{figure} 

We may assume without any loss of generality that there is a single root clause on each side of the variable row~\cite{DBLP:conf/soda/AgarwalAGH21}, as occurs in Figure~\ref{fig:mpsat_example}, where $C_1$ and $C_2$ are the two root clauses.

We now recursively define the nesting level of a clause using the hierarchical structure of $G_\phi$:
Any non-parent clause has a nesting level of $0$.
Any parent clause $C$ has a nesting level of $\ell + 1$, where $\ell$ is the maximum level of any child of $C$.
For example, in Figure~\ref{fig:mpsat_example}, $C_1$ has a nesting level of $2$.

\paragraph{Reduction Outline.}
Given a Monotone Planar 3-SAT instance $\phi$ we construct a MAPF instance $M \coloneqq M(\phi)$ that has an individually optimal solution, for the flowtime objective, if and only if $\phi$ is satisfiable.
The resulting instance will allow proving the hardness of all the considered MAPF variants, up to minor changes that we describe later.
We use the rectilinear embedding of $G_{\phi}$, which is easily embedded in the 2D grid.
Roughly speaking, we turn vertices and edges in this embedding into horizontal and vertical unit-width corridors, respectively.
Such corridors are formed by filling up most of the grid with obstacle cells. Refer to \Cref{fig:construction}, which will be described in more detail later.
We represent each clause in $\phi$ by a corresponding agent as follows, resulting in $m$ agents in total:
For a positive clause $C$ in $\phi$, we have a respective \emph{positive agent} that starts in $C$'s embedding and has to reach its target by moving down and right, crossing the variable row.
Symmetrically, for a negative clause $C$ in $\phi$, we have a respective \emph{negative agent} (also starting in $C$'s embedding) that has to reach its target by moving up and right.

The key idea in the construction is to force the unidirectional travel of agents through \emph{variable channels}, which are long vertical corridors that represent variables (yellow in \Cref{fig:construction}).
Our paths are constructed so that each agent is forced to pass through one of the variable channels corresponding to variables appearing in the clause.
This creates a point of contention along variable channels, for which we only have uni-directional motion in an individually optimal solution. The direction of motion in such a solution corresponds to the variable's assignment in $\phi$.%

The overall idea is similar to the proof of~\cite{YuPlanar}, where MAPF was first shown to be NP-complete for planar graphs. We therefore also use some of the same terminology.

\paragraph{The details.}
We now discuss $M$ in more detail.
We convert the rectilinear embedding of $\phi$ to $M$ as follows.
Each clause is converted to a clause gadget with the following path segments:
The horizontal rectangle (clause vertex in $G_{\phi}$) becomes a horizontal corridor of unit height and each leg (edge) becomes a vertical corridor of unit width.
Each positive (resp. negative) agent is initially located in the top (resp. bottom) left cell of its corresponding gadget.

Each variable gadget is a long vertical path (one unit wide) of length $L$ (to be specified later), which forms the variable channel.
All the variable channels start and end at the same grid row.
Clause gadgets are connected to variable channels as follows.
Each variable has two horizontal corridors, one above the variable row and one below it. %
On each side of the variable row, all the legs that lead to the variable go the horizontal corridor of that variable (on the respective side).
The two horizontal corridors connect to the variable channel from the left.
For example, in \Cref{fig:construction}, for $x_1$ we have three incoming legs to the respective horizontal corridor on the positive side, making it $5$ units wide.

Note that the rightmost leg entering a variable (on each side of the variable row) is located in the same column as that variable's channel. If a certain literal only appears once, then the horizontal corridor is only one unit wide (such as for all variables except $x_1$ on the positive side of the variable row in \Cref{fig:construction}).

Before specifying the agents' target cells, we describe the \textit{backwards paths}.
We add such paths to $M$ to facilitate the motion of agents on the opposite vertical side of construction with respect to their start cells.
Let $C_i$ and $C_j$ be two positive clauses such that $C_i$ is the parent of $C_j$.
The backwards path for this pair of clauses starts at the top right corner of $C_i$'s gadget and continues rightward until reaching a leg of $C_j$'s gadget.
This placement ensures that only negative agents can use the path (as a positive agent would have to detour from its individually optimal path to use the backward path).
Backwards paths below the variable row are handled symmetrically.
These paths are shown in pale cyan in \Cref{fig:construction}.

To enable negative agents to reach their targets, we create an \emph{opening} from the positive root clause gadget $\G$ that leads to their targets.
The opening consists of one cell neighboring the top right cell of $\G$, which we denote by $c$ (the cell is marked so \Cref{fig:construction}).
All targets of negative agents are placed above and to the right of this opening, e.g., by placing targets one above another.
The targets of positive agents are handled symmetrically.
We denote the corresponding opening cell on the bottom side of the construction by $c'$.

Let $W$ be the minimum number of columns required for the construction. %
We have $W \le 6m$ by assuming $3$ legs per clause and an empty cell between every two adjacent legs.
We define the \textit{height} of a clause gadget as the length of its legs.
We adjust the height of each clause as follows: %
We set the height of clauses with a nesting level of $i$ to be $iW$.
Setting the heights is easily possible by adjusting the lengths of the legs.
With this placement, it is straightforward to verify that each positive (resp. negative) agent's start cell has a unique (Manhattan) distance to $c'$ (resp. $c$), which is a key property we later use.

We now set $L$, the length of the variable channels, to be the maximum distance required by any agent to enter one of its respective variable channels, over all the agent's individually optimal paths.
For example, for $c_3$ the maximum distance is 5 (going right and then down).
Observe that $L$ is $O(m^2)$ (and is roughly equal to the maximum height of any clause gadget). %

\paragraph{Size of the construction.}
We give a rough upper bound on the size of the construction.
Each clause is composed of at most 4 horizontal/vertical corridors, each of length at most $O(m^2)$ and has backwards path of length at most $O(n)$.
There are $n$ variable channels of length $O(m^2)$.
Therefore $O(m^3 + nm^2)$ cells suffice to form $M$ and thus the reduction can be performed in polynomial time.
We conjecture that a more compact construction is possible.

We now turn to the correctness of the reduction and thus prove \Cref{thm:hardness}.
\begin{proof}[Proof of \Cref{thm:hardness}]
We first prove statement (ii) by showing $\phi$ has a satisfying assignment if and only if $M$ has an individually optimal solution, for flowtime.
In particular, this statement also establishes the hardness of optimizing flowtime.

Given a satisfying assignment to $\phi$ we define an individually optimal solution to $M$.
We describe the motion of a positive agent as negative agents move in a symmetric manner. %
Each positive agent moves using down and right moves towards its target at every time step and only stops when reaching its target.
Each such agent is able to travel along a variable channel corresponding to a variable assigned true as per the satisfying assignment.
Upon reaching the other end of a variable channel, the agent continues down, entering a leg of a negative clause gadget, say $C_i$, and then right until reaching a backwards path.
The backwards path takes the agent to the leg of the gadget of $C_i$'s parent clause.
The agent can continue traveling similarly through the parent gadget, using additional backwards paths, until reaching the gadget of the negative root clause.
From there, it easily reaches its target via the opening cell.
All the possible variable channels for an agent result in the same path length for that agent. %

We now prove that the described solution is collision-free.
First, note that the length of the variable channels is sufficiently long so that negative agents cannot collide with positive agents. This is true, since after $L$ time steps every agent is inside some variable channel and each channel contains only positive agents or only negative agents.

Therefore, we turn to collisions among positive agents (the same arguments apply to negative agents).
Recall that $c'$ denotes the opening cell of the negative root clause gadget.
By construction, each start cell of a positive agent has a unique (Manhattan) distance to $c'$. %
At each time step, all the positive agents that 
have not reached $c'$ get close to $c'$ by one unit, since they all move down or right.
Hence, such agents maintain the property of having a unique distance from $c'$.
Therefore, they cannot collide, as a collision requires two agents to be located the same distance from $c'$.
After an agent visits $c'$, it continues to its target and does not block other agents from reaching their target later on, as per the solution.

In the other direction, let us assume that $M$ has an individually optimal solution.
Since agents must move at every time step (along a shortest path) until reaching their respective targets, the solution cannot have both a positive agent and a negative agent traverse the variable channel.
This is true because an agent can reach a variable channel in at most $L$ time steps.
Since the length of a variable channel is $L$, if it is used by both a positive and negative agent, one of them would have to wait for the other.
Therefore we simply set a variable in $\phi$ to be true if and only the corresponding variable channel is used by a positive agent in the given solution.
Such an assignment satisfies $\phi$ since each clause's agent must use a variable channel corresponding to one of the variables in the clause, which sets the variable to the appropriate value.

Next, we prove hardness for the makespan objective.
We do so by first slightly changing $M$.
We ensure that all agents have the same individually optimal path length as follows:
Let $d$ denote the maximum individually optimal path length over all agents in $M$.
We now make every target be at a distance $D$ from its respective start cell by moving each target rightward a sufficient number of cells so that the condition holds.
Let $M'$ denote the resulting MAPF instance.
It is straightforward to verify that $M'$ has a solution with a makespan of $d$ if and only if $M$ has an individually optimal solution for flowtime.

Finally, to prove statement (i) for both objectives, we use the same constructions $M$ and $M'$, with the positive agents as one group and the negative agents as the other group.
It is straightforward to verify that the statement follows.
\end{proof}

We remark that the proof holds for all common conflict types in MAPF (see problem definition above) since it relies on head-on collisions only.

\begin{figure}[t]
    \centering    \includegraphics[width=0.34\linewidth]{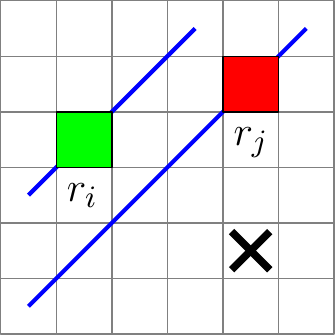}
    \caption{Illustration of the key observation (see preliminaries of our algorithm) with the two diagonals marked. Agent $r_j$ would always be closer to a potential point of conflict (marked by a cross) than $r_i$.}
    \label{fig:diff_diags}
\end{figure}

\section{Efficiently Finding Individually Optimal Solutions for a Restricted Case}
In this section, we establish a tractability frontier for finding an individually optimal solution for the flowtime objective.
Our hardness proof relies on three directions of motion to create contention in narrow corridors between agents moving in opposite directions.
We now show that if we further restrict the problem to only two directions of motion, we can efficiently solve this MAPF variant:

\begin{theorem} \label{thm:alg}
Let $G$ be a subgraph of the 2D grid, i.e., a 2D grid with holes.
For MAPF with the flowtime objective on $G$ where agents are restricted to move along two directions, there is an algorithm that finds an individually optimal solution or reports that none exists.
The algorithm runs in time $O(N|V|)$, where $N$ is the number of agents and $|V|$ is the number of cells in $G$.
\end{theorem}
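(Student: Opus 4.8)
The plan is to exploit the rigidity of two-direction motion. With only two allowed directions, say right and down (every agent's goal lies weakly to the right and weakly below its start, after fixing the two cardinal directions), an individually optimal path for agent $r_i$ is exactly a monotone staircase path from $s_i$ to $g_i$ inside $G$, and in an individually optimal solution every agent must move (never wait) along such a staircase until arriving. So the task reduces to: decide whether one can pick, for each agent, a monotone $s_i$--$g_i$ path in $G$, so that the resulting synchronized motions are collision-free. I would first record the structural consequence that at time step $t$ agent $r_i$ sits on the anti-diagonal $\{(x,y) : x+y = x(s_i)+y(s_i)+t\}$; hence two agents can only ever conflict if their start cells lie on the same anti-diagonal (this is the "key observation" already illustrated in Figure~\ref{fig:diff_diags}). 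This immediately partitions the agents into independent anti-diagonal classes, so it suffices to solve the problem for agents sharing one anti-diagonal. Within such a class, order the agents along the anti-diagonal; I would prove that in any feasible individually optimal solution this order is preserved for all time (agents move in lockstep, one cell per step, and cannot swap across a diagonal), so the agents are totally ordered and behave like non-crossing staircases.

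Next I would develop the algorithmic core as a greedy / prioritized-planning argument on each anti-diagonal class. Process the agents of the class in their diagonal order $r_{(1)}, r_{(2)}, \dots$ and assign to each, in turn, the "lowest" (say, right-greedy, then down) monotone path that stays in $G$ and does not collide with the already-assigned agents. The crucial claim is an exchange/canonical-form lemma: if \emph{any} individually optimal solution exists for this class, then this greedy choice also extends to a full individually optimal solution. The intuition is that pushing agent $r_{(j)}$'s staircase as far as possible in one consistent direction only makes more room for the later agents $r_{(j+1)}, \dots$; formally I would take an arbitrary feasible solution and show it can be transformed, one agent at a time, into the greedy one without introducing collisions, using the monotonicity of the paths and the fixed diagonal order. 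A symmetric statement handles the case where no agent can be placed (then report infeasible). Combining across all anti-diagonal classes yields the full solution or a certificate of infeasibility.

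For the running time, note each agent's greedy monotone path is computed by a single pass that, at each cell, tries the preferred direction and falls back to the other, consulting $G$ and the occupancy constraints imposed by earlier agents in the same class; with the right data structure (for each anti-diagonal and time step, which cell is occupied) this is $O(|V|)$ work per agent, for $O(N|V|)$ total. I expect the main obstacle to be the canonical-form lemma: one must argue carefully that the greedy per-agent choice never "paints itself into a corner" in a way that blocks a later agent even though some other choice would have worked — i.e., that greediness is globally safe, not just locally. The fix is to phrase it as a path-uncrossing / staircase-monotonicity exchange argument, leveraging that all agents advance exactly one anti-diagonal per step so the only possible conflicts are between diagonal-neighbors and are entirely determined by the relative left/right offsets of their staircases, which the greedy order controls. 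Handling holes in $G$ (which can make a monotone path infeasible, or force a particular branch) is a secondary technical point folded into the same induction.
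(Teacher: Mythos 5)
Your overall strategy matches the paper's: partition the agents by starting diagonal using the fact that every agent advances exactly one diagonal per time step, run prioritized planning within each diagonal class in a fixed order along the diagonal with a consistent right-before-down tie-break, and justify the greedy choice by an exchange argument that converts an arbitrary individually optimal solution, one agent at a time, into the canonical greedy one. The ``canonical-form lemma'' you identify as the main obstacle is exactly the paper's \Cref{lem:alg}, which formalizes the uncrossing step via the notion of one monotone path being \emph{weakly above} another: the greedy path $p_{j+1}$ is weakly above $p'_{j+1}$, and the remaining paths $p'_{j+2},\ldots,p'_u$ of the reference solution are disjoint from the region above $p'_{j+1}$, hence do not conflict with $p_{j+1}$. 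Your running-time accounting also matches the paper's $O(N|V|)$ bound.

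There is, however, one genuine gap: the claim that two agents can conflict only if their start cells lie on the same diagonal, and hence that the diagonal classes are fully independent. This is false. If $r_j$ starts on a diagonal to the right of $r_i$'s, then $r_j$ reaches any common cell strictly earlier than $r_i$, so they cannot meet at a cell that $r_j$ merely passes through --- but $r_j$ \emph{stops} at its target $g_j$ and occupies it from its arrival time onward, and $r_i$ would reach $g_j$ strictly later. So $r_i$ can collide with $r_j$ precisely at $g_j$. Consequently the classes cannot be solved independently: they must be processed from right to left, and when planning for a class one must mark the targets of all agents in classes further to the right as permanent obstacles (this is exactly what the paper does). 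As written, your algorithm could return paths in which a left-diagonal agent runs into a parked right-diagonal agent, and your exchange lemma would also need these extra obstacles in order to compare against a genuinely feasible reference solution. The fix is local and does not disturb the rest of your argument, but it is required for correctness.
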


We now describe the algorithm and prove its correctness.

\paragraph{Preliminaries.}
Let us assume without any loss of generality that the allowed directions of motion are down and right (the problem is trivial for two opposite directions).
Furthermore, we assume that each agent's target is located either below or to the right of its start cell (or both), since otherwise the algorithm should clearly report "no solution".

Let us partition the agents into sets based on the grid diagonal on which they start.
That is, two agents with starting cells having coordinates $(x_i,y_i)$ and $(x_j,y_j)$ are in the same \emph{diagonal set} if and only if $x_i-x_j = y_i-y_j$.
Let $D_1, \ldots, D_r$ denote the resulting diagonal sets in left to right order (according to the $x$-intercept of the corresponding grid diagonal).

We use the following key observation:
Let $r_i, r_j$ be two agents such that $r_i \in D_a, r_j \in D_b$, $a < b$.
In an individually optimal solution, $r_i$ and $r_j$ can collide only at \trg{j}, since for any other cell $c$ where their paths intersect $r_j$ will leave $c$ the time step before $r_i$ enters $c$.
In other words, $r_i$'s path must simply avoid $r_j$'s target and need not take into account $r_j$'s path at all.
See \Cref{fig:diff_diags} for an illustration.
Therefore, our algorithm can essentially handle each diagonal set independently and treat the targets of agents from diagonal sets that are to the right as obstacles.

\begin{figure}[t]
    \centering    \includegraphics[width=0.74\linewidth]{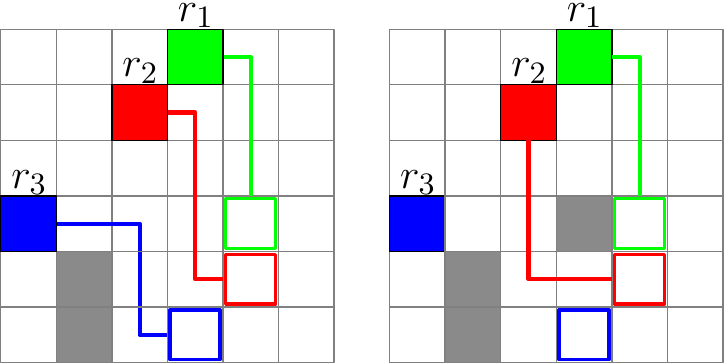}
    \caption{Left: An instance consisting of agents on a single diagonal where the algorithm finds a solution. Right: When one obstacle cell is added, the left instance becomes a "No" instance.
    In both cases, our algorithm plans paths in the order $r_1$, $r_2$, $r_3$. Observe that preferring right over down is crucial for finding a solution for the left instance. For the right instance, paths are found for $r_1$ and $r_2$, but the algorithm concludes that it is not possible to also find an individually optimal path for $r_3$.}
    \label{fig:run_ex}
\end{figure}

\paragraph{Algorithm.}
Our algorithm can be viewed as a combination of two simple elements: prioritized planning using a special ordering of the agents and a tie-breaking rule for the single-agent path search.
We first give a succinct description of the algorithm along such terms and then provide an explicit description.

Recall that in prioritized planning~\cite{DBLP:conf/aiide/Silver05} we plan the agents' paths one by one, in a decoupled manner, where each agent's path must not conflict with previously planned paths.
We apply prioritized planning by planning paths for agents in this manner, in right to left order of their start positions (with ties broken arbitrarily).
Furthermore, we require that each single agent path search prefers going right before going down (e.g., a path planned without any obstacles would go right and then down).
During this process, if the optimal path found for a single agent is not individually optimal, we report a "No" instance.
Otherwise, we clearly get an individually optimal solution.
We shall see that both the ordering of the agents and the tie-breaking in each agent's path search are required for correctness.

We give an alternative, more explicit description of the algorithm that can provide better intuition for its correctness.
The explicit version of the algorithm plans the paths for each diagonal set $D_i$ separately, in the order $i=r,r-1,\ldots,1$.
Therefore, we fix $D_i$ and describe the planning for it.

First, for all $j > i$ we \emph{remove} from $G$, i.e., mark as obstacles, the target cells of the agents in $D_j$.
The removed cells cannot be used by agents in $D_i$ in an individually optimal solution since such cells would be occupied by agents from diagonals right of $D_i$.

Let $r_1,\ldots,r_{u}$ denote the agents sorted along $D_i$ according to decreasing $x$-value of their starting cells.
We then go over each $r_j$ in this order and attempt to find an individually optimal path for $r_j$ as follows.
We run a DFS from the start cell of $r_j$ that prefers visiting cells to the right of the currently visited cell.
If no such path is found within the bounding box of the start and target cell of $r_j$, report a "No" instance.
Otherwise, let $p_j$ denote the resulting path.
We remove the cells appearing on $p_j$ from $G$ and proceed to the next agent.
The removed cells cannot be used by other agents in $D_i$ since that would result in a collision.

Once finished with $D_i$, add back all the removed cells of $p_1, \ldots, p_{u}$ to $G$ and proceed to the next diagonal.
An example of the algorithm handling a single diagonal $D_i$ is shown in \Cref{fig:run_ex}.

\paragraph{Correctness.}
First, we note that if the algorithm finds a solution then it is clearly individually optimal and collision-free. Therefore, if no such solution exists, the algorithm will report a "No" instance.
Hence, it remains to prove that if an individually optimal solution exists, the algorithm will find such a solution.

In what follows, we use the following key definition.
Given a path $p$, we denote by $U(p)$ the union of cells that either appear in $p$ or are above some cell in $p$. %
A path $q$ is said to be \emph{weakly above} $p$ if the cells of $q$ are contained in $U(p)$.
See Figure~\ref{fig:weakly_above}.
We now show that the existence of an individually optimal solution implies the existence of a canonical solution identical to the one found by the algorithm.
In this canonical solution, each path $p_j$ for agent $r_j$ is weakly above the agent's path in any individually optimal solution.

\begin{figure}[t]
    \centering    \includegraphics[width=0.68\linewidth]{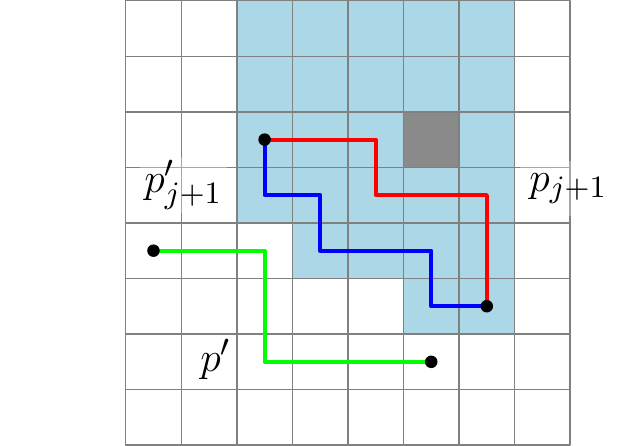}
    \caption{An illustration of the paths $p'$ (green), $p'_{j+1}$ (blue), and $p_{j+1}$ (red) from \Cref{lem:alg}.
    The cells of $U(p'_{j+1})$ are shaded and $p_{j+1}$ is weakly above $p'_{j+1}$.}
    \label{fig:weakly_above}
\end{figure}

Let us fix some diagonal set $D=r_1,\ldots,r_u$, where the $r_i$'s appear in decreasing $x$-value of their starting cells.
We now assume that an individually optimal solution exists and denote the paths for the agents of $D$ in this solution by  $P'=\set{p'_1, \ldots, p'_u}$.
We denote by $P=\set{p_1, \ldots, p_u}$ the corresponding paths in the solution found by the algorithm.
Let $P_j$ denote a solution composed of the first $j$ paths of $P$ and the remainder of the paths are from $P'$, i.e., $P_j \coloneqq \set{p_1, \ldots, p_j, p'_{j+1}, \ldots, p'_u}$. %
Roughly speaking, the following lemma shows that we may transform the paths in $P'$ one by one to the paths of $P$, the canonical solution, while maintaining a collision-free solution.

\begin{lemma} \label{lem:alg}
For every $j = 0,\ldots,u$, the algorithm succeeds in finding the paths $p_1, \ldots, p_j$ and the solution $P_j$ is collision-free.
\end{lemma}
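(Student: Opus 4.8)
The plan is to prove \Cref{lem:alg} by induction on $j$. The base case $j=0$ is immediate: $P_0 = P'$ is collision-free by assumption, and no paths from $P$ have been planned. For the inductive step, I would assume that $P_j$ is collision-free and that the algorithm has successfully found $p_1,\ldots,p_j$, and I would show two things: (a) when the algorithm processes agent $r_{j+1}$, its DFS (preferring right over down, within the bounding box of $s_{j+1}$ and $g_{j+1}$, on the grid with the cells of $p_1,\ldots,p_j$ and the right-diagonal targets removed) succeeds and returns a path $p_{j+1}$; and (b) the resulting solution $P_{j+1}$ is collision-free. The heart of the argument is to exhibit a concrete individually optimal path for $r_{j+1}$ that avoids all the removed cells, which both certifies that the DFS finds \emph{something} and lets me control what it finds via the tie-breaking rule.

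The key construction I would use is the ``weakly above'' operation. Starting from $p'_{j+1}$ (the path of $r_{j+1}$ in the current collision-free solution $P_j$ — note it is still $p'_{j+1}$ since only indices $\le j$ have been replaced), I want to push it upward/rightward until it is the topmost individually optimal path avoiding the obstacles, and argue this is exactly what the right-preferring DFS returns. Concretely, I would show: (i) every cell of $p'_{j+1}$ is non-obstacle and not a removed target of a right-diagonal (true since $P_j$ is collision-free and $r_{j+1}$ is leftmost among the currently relevant agents by the processing order, so the key observation from the Preliminaries applies — agents to the ``right'' within $D_i$ were already routed and their cells removed, agents in right diagonals contribute only their targets); (ii) the right-preferring DFS within the bounding box, run on this punctured grid, returns the unique ``highest'' monotone staircase path $p_{j+1}$, and this path is weakly above $p'_{j+1}$ — intuitively because whenever the staircase could go right it does, so it never dips below a feasible lower path. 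For (b), I must check $p_{j+1}$ collides with none of $p_1,\ldots,p_j$ (automatic: those cells were removed from $G$ before the DFS, so $p_{j+1}$ avoids them), and none of $p'_{j+2},\ldots,p'_u$. This last point is where I would lean on the diagonal key observation together with weak-aboveness: since $p_{j+1}$ is weakly above $p'_{j+1}$, and $p'_{j+1}$ did not collide with the later $p'_k$'s in $P_j$, and agents on the \emph{same} diagonal started in a fixed left-to-right order so a collision can only occur at a target cell — I need to rule out that lifting $p_{j+1}$ up introduced a new collision at some $g_k$ with $k > j+1$. Here I would use that $r_{j+1}$ is to the right of $r_k$ on the same diagonal, hence $r_{j+1}$ reaches any shared cell strictly earlier than $r_k$ by at least one step, and a target $g_k$ that is weakly above $p'_{j+1}$ but not on $p'_{j+1}$ would have been usable — so either $g_k \notin U(p_{j+1})$ or passing through it is still timing-safe.

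The main obstacle I anticipate is step (b): showing that the \emph{lifted} canonical path $p_{j+1}$ does not create collisions with the not-yet-replaced optimal paths $p'_{j+2},\ldots,p'_u$, i.e., that weak-aboveness is compatible with collision-freeness with respect to the rest of $P'$. The naive worry is that pushing $r_{j+1}$'s path upward makes it bump into some agent that was fine before. The resolution should be exactly the diagonal key observation: within the same diagonal set, two agents' paths can only meet at the later agent's target, and $r_{j+1}$ processed earlier means it is rightmost (hence ``later'' in arrival along any shared cell) — so any cell it newly occupies is vacated by, or never reached until after, the conflicting agent, EXCEPT possibly at that agent's target; and a target landing on $U(p'_{j+1}) \setminus p'_{j+1}$ can be shown either not to lie on the specific staircase $p_{j+1}$, or to be safe because $r_{j+1}$ occupies it only before the other agent arrives. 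I would isolate this as the one genuinely delicate case and handle it by a short timing computation using the fixed diagonal ordering; everything else (DFS success, monotonicity of the staircase, avoidance of removed cells, weak-aboveness) is structural and should follow cleanly from the definition of $U(\cdot)$ and the right-over-down preference.
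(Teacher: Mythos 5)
Your overall skeleton (induction on $j$, using $p'_{j+1}$ as a feasibility witness for the DFS, concluding that the right-preferring search returns a path weakly above $p'_{j+1}$, then checking non-interference with $p'_{j+2},\ldots,p'_u$) matches the paper's proof, and your part (a) is essentially correct. The gap is in the step you yourself flag as delicate, and your proposed resolution of it is wrong. You write that for agents on the same diagonal "a collision can only occur at a target cell" and that "$r_{j+1}$ reaches any shared cell strictly earlier than $r_k$ by at least one step." Both statements misapply the key observation from the Preliminaries, which concerns agents in \emph{different} diagonal sets. Within a single diagonal set every agent occupies the same shifted diagonal at every time step (all agents move at every step in an individually optimal solution, and each move advances the diagonal index by one), so two same-diagonal agents that ever share a cell are there \emph{at the same time step}: any common cell is a vertex conflict, there is no one-step slack, and conflicts are not confined to targets. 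Consequently, the "short timing computation" you plan to use to dismiss the new conflicts created by lifting the path does not exist.

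The argument that actually closes this step is geometric, not temporal. For $k > j+1$, the agent $r_k$ starts down-left of $s_{j+1}$ on the same diagonal, and $p'_k$ is a monotone (right/down) staircase that is cell-disjoint from $p'_{j+1}$ (cell-disjointness is exactly what collision-freeness of $P_j$ means for same-diagonal agents). A column-by-column induction shows that such a path can never enter $U(p'_{j+1})$ at all: to lie strictly above $p'_{j+1}$ in some column it would either have to pass through a cell of $p'_{j+1}$ in the preceding column or already lie strictly above it there, and it cannot do so in the leftmost column of $U(p'_{j+1})$ because its rows never rise above the row of $s_k$, which is below $s_{j+1}$. Hence $p'_k$ is disjoint from all of $U(p'_{j+1})$, which contains every cell of $p_{j+1}$, so no new conflict can arise. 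This is precisely the sentence the paper's proof rests on ("since $p'$ does not conflict with $p'_{j+1}$, the cells along $p'$ are disjoint from $U(p'_{j+1})$"), and it is the piece that is missing from, and effectively contradicted by, your write-up.
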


\begin{proof}
The proof is by induction on $j$. The base case of $j=0$ is trivial since $P_0 = P'$.

Let us now assume that the lemma holds for some $j$.
That is, the algorithm has found paths for the first $j$ agents and the path $p'_{j+1}$ of $P'$ does not conflict with the paths of these agents.
The next agent of $D$ for which the algorithm plans a path is $r_{j+1}$.
Since this path search prefers right moves before downward moves, it considers all paths that are weakly above $p'_{j+1}$ before eventually considering $p'_{j+1}$ as an option.
Therefore, the search for $p_{j+1}$ will succeed, resulting in a path weakly above $p'_{j+1}$ (which may be the same as $p'_{j+1}$).

It remains to prove that $p_{j+1}$ does not conflict with paths in $\set{p'_{j+2}, \ldots, p'_u}$. Let $p'$ be a path in $\set{p'_{j+2}, \ldots, p'_u}$.
Since $p_{j+1}$ is weakly above $p'_{j+1}$, the cells along $p_{j+1}$ are contained in $U(p'_{j+1})$.
Furthermore, since $p'$ does not conflict with $p'_{j+1}$, the cells along $p'$ are disjoint from $U(p'_{j+1})$;
See \Cref{fig:weakly_above}.
Therefore, combining the last two facts, $p'$ does not conflict with $p_{j+1}$, which means that the lemma holds for $j+1$.
\end{proof}

For $j=u$ the lemma implies that the algorithm succeeds in finding paths for the agents of $D$. Finally, the lemma holds for all diagonal sets, thus establishing the correctness of the algorithm. \Cref{thm:alg} follows.

\section{Discussion and Conclusion}

We have provided a fine-grained complexity analysis of time-optimal MAPF on 2D grids, thereby closing two gaps in the literature.
For the flowtime objective, our analysis reveals a tractability frontier concerning the number of directions the agents can move in.
While our positive result holds for quite a restricted case, it sheds new light on the complexity of MAPF that may have further implications, which we now discuss.

In \cite{YuPlanar}, agents that move in opposite directions are used to show the hardness of MAPF for various distance and time objectives.
In turn, Yu concludes that the problem becomes hard in the presence of contention
that occurs when two or more groups of agents need to move in opposite directions through the same set of narrow paths.
One natural follow-up question is whether the problem is hard without such opposite-direction movement. %
A "no" answer here, i.e., in the form of an efficient algorithm for a restricted MAPF variant, may allow speeding up MAPF algorithms by incorporating its reasoning in solvers for the general problem. %

This question has been recently addressed in \cite{DBLP:conf/atal/GeftH22} for distance-optimal MAPF (where the goal is to minimize the total traveled distance, ignoring time) on 2D grids.
They show that distance-optimal MAPF is NP-hard even if the agents need to go in the same general direction, specifically when only allowed to move down and right.
Furthermore, it may be verified that their proof shows that in this case finding an individually optimal solution is already NP-hard. %
In this context, our paper somewhat surprisingly reveals that the same question for the flowtime objective is tractable.
While the complexity of the problem without restricting $\Delta$ remains open for flowtime, our algorithm suggests that MAPF with agents moving along two directions (or similar instances where most agents move in the same general direction) may be easier to solve efficiently. %

From the perspective of parameterized complexity~\cite{cygan2015parameterized}, which has been highlighted as a research avenue for tackling MAPF's hardness~\cite{DBLP:conf/atal/SalzmanS20}, our analysis provides guidance concerning $\Delta$ as a parameter. %
Here $\Delta$ could be a good candidate due to empirical observations showing that sometimes it can be much smaller than the number of agents. %
Our hardness proof rules out $\Delta$ as a candidate for 2D grids in the general case.
That is, an algorithm with the running time $f(\Delta) \cdot \mathrm{poly}(|M|)$, where $|M|$ is the size of the MAPF instance, is unlikely to exist.\footnote{This is true, since otherwise, we could run the algorithm on MAPF instances from our reduction and efficiently solve NP-complete problems.}
On the other hand, we cannot say the same for the restricted case where the agents are moving in the same general direction, such as down and right, where our restricted positive result calls for further investigation.

We point out that the class of instances in which each agent's target is either below or to the right (or both) of its start cell can be empirically challenging.
This is easily seen through the issue of rectangle symmetries, which arise in instances of this class containing only two agents~\cite{DBLP:conf/aaai/LiHS0K19}.
Before being explicitly addressed, such symmetries have significantly hindered the performance of state-of-the-art solvers such as CBS, Lazy CBS, and BCP~\cite{LiPhD22}.
Therefore, a research direction currently being explored is using the insights gained from exploring this class of instances to further enhance the performance of MAPF solvers.
This includes leveraging our algorithm towards improved symmetry reasoning and lower bounding.

\section*{Acknowledgments}
This work has been supported in part by the Israel Science Foundation (grant no.~1736/19), by NSF/US-Israel-BSF (grant no.~2019754), by the Israel Ministry of Science and Technology (grant no.~103129), by the Blavatnik Computer Science Research Fund, by the Yandex Machine Learning Initiative for Machine Learning at Tel Aviv University, by the Shlomo Shmeltzer Institute for Smart Transportation at Tel Aviv University, and by the Israeli Smart Transportation Research Center (ISTRC).

The author wishes to acknowledge Dominik Michael Krupke for a discussion that inspired the hardness results and also Dan Halperin and Nathan Libman for useful discussions and comments.

\bibliography{references.bib}

\begin{thebibliography}{35}
\providecommand{\natexlab}[1]{#1}

\bibitem[{Abrahamsen et~al.(2023)Abrahamsen, Geft, Halperin, and
  Ugav}]{fixed-path}
Abrahamsen, M.; Geft, T.; Halperin, D.; and Ugav, B. 2023.
\newblock Coordination of Multiple Robots along Given Paths with Bounded
  Junction Complexity.
\newblock In \emph{{AAMAS}}, 932--940. {ACM}.

\bibitem[{Agarwal et~al.(2021)Agarwal, Aronov, Geft, and
  Halperin}]{DBLP:conf/soda/AgarwalAGH21}
Agarwal, P.~K.; Aronov, B.; Geft, T.; and Halperin, D. 2021.
\newblock On Two-Handed Planar Assembly Partitioning with Connectivity
  Constraints.
\newblock In \emph{{SODA}}, 1740--1756. {SIAM}.

\bibitem[{Atzmon, Diei, and Rave(2019)}]{DBLP:conf/socs/AtzmonDR19}
Atzmon, D.; Diei, A.; and Rave, D. 2019.
\newblock Multi-Train Path Finding.
\newblock In \emph{{SOCS}}, 125--129. {AAAI} Press.

\bibitem[{Banfi, Basilico, and Amigoni(2017)}]{DBLP:journals/ral/BanfiBA17}
Banfi, J.; Basilico, N.; and Amigoni, F. 2017.
\newblock Intractability of Time-Optimal Multirobot Path Planning on {2D} Grid
  Graphs with Holes.
\newblock \emph{{IEEE} Robotics Autom. Lett.}, 2(4): 1941--1947.

\bibitem[{Bart{\'{a}}k, Ivanov{\'{a}}, and Svancara(2021)}]{coloredMAPF}
Bart{\'{a}}k, R.; Ivanov{\'{a}}, M.; and Svancara, J. 2021.
\newblock From Classical to Colored Multi-Agent Path Finding.
\newblock In \emph{{SOCS}}, 150--152. {AAAI} Press.

\bibitem[{Belov et~al.(2020)Belov, Du, de~la Banda, Harabor, Koenig, and
  Wei}]{DBLP:conf/socs/BelovDBHKW20}
Belov, G.; Du, W.; de~la Banda, M.~G.; Harabor, D.; Koenig, S.; and Wei, X.
  2020.
\newblock From Multi-Agent Pathfinding to 3D Pipe Routing.
\newblock In \emph{{SOCS}}, 11--19. {AAAI} Press.

\bibitem[{Cygan et~al.(2015)Cygan, Fomin, Kowalik, Lokshtanov, Marx, Pilipczuk,
  Pilipczuk, and Saurabh}]{cygan2015parameterized}
Cygan, M.; Fomin, F.~V.; Kowalik, L.; Lokshtanov, D.; Marx, D.; Pilipczuk, M.;
  Pilipczuk, M.; and Saurabh, S. 2015.
\newblock \emph{Parameterized algorithms}, volume~5.
\newblock Springer.

\bibitem[{de~Berg and Khosravi(2010)}]{DBLP:conf/cocoon/BergK10}
de~Berg, M.; and Khosravi, A. 2010.
\newblock Optimal Binary Space Partitions in the Plane.
\newblock In \emph{{COCOON}}, volume 6196 of \emph{Lecture Notes in Computer
  Science}, 216--225. Springer.

\bibitem[{Demaine et~al.(2019)Demaine, Fekete, Keldenich, Meijer, and
  Scheffer}]{DBLP:journals/siamcomp/DemaineFKMS19}
Demaine, E.~D.; Fekete, S.~P.; Keldenich, P.; Meijer, H.; and Scheffer, C.
  2019.
\newblock Coordinated Motion Planning: Reconfiguring a Swarm of Labeled Robots
  with Bounded Stretch.
\newblock \emph{{SIAM} J. Comput.}, 48(6): 1727--1762.

\bibitem[{Eiben, Ganian, and Kanj(2023)}]{MAPF-FPT}
Eiben, E.; Ganian, R.; and Kanj, I. 2023.
\newblock The Parameterized Complexity of Coordinated Motion Planning.
\newblock In \emph{39th International Symposium on Computational Geometry, SoCG
  2023}, LIPIcs. Schloss Dagstuhl - Leibniz-Zentrum f{\"{u}}r Informatik.
\newblock To appear.

\bibitem[{Ewing et~al.(2022)Ewing, Ren, Kansara, Sathiyanarayanan, and
  Ayanian}]{DBLP:conf/atal/EwingRKSA22}
Ewing, E.; Ren, J.; Kansara, D.; Sathiyanarayanan, V.; and Ayanian, N. 2022.
\newblock Betweenness Centrality in Multi-Agent Path Finding.
\newblock In \emph{{AAMAS}}, 400--408. International Foundation for Autonomous
  Agents and Multiagent Systems {(IFAAMAS)}.

\bibitem[{Felner et~al.(2018)Felner, Li, Boyarski, Ma, Cohen, Kumar, and
  Koenig}]{DBLP:conf/aips/FelnerLB00KK18}
Felner, A.; Li, J.; Boyarski, E.; Ma, H.; Cohen, L.; Kumar, T. K.~S.; and
  Koenig, S. 2018.
\newblock Adding Heuristics to Conflict-Based Search for Multi-Agent Path
  Finding.
\newblock In \emph{{ICAPS}}, 83--87. {AAAI} Press.

\bibitem[{Felner et~al.(2017)Felner, Stern, Shimony, Boyarski, Goldenberg,
  Sharon, Sturtevant, Wagner, and Surynek}]{DBLP:conf/socs/FelnerSSBGSSWS17}
Felner, A.; Stern, R.; Shimony, S.~E.; Boyarski, E.; Goldenberg, M.; Sharon,
  G.; Sturtevant, N.~R.; Wagner, G.; and Surynek, P. 2017.
\newblock Search-Based Optimal Solvers for the Multi-Agent Pathfinding Problem:
  Summary and Challenges.
\newblock In \emph{{SOCS}}, 29--37. {AAAI} Press.

\bibitem[{Gange, Harabor, and Stuckey(2019)}]{LazyCBS}
Gange, G.; Harabor, D.; and Stuckey, P.~J. 2019.
\newblock Lazy {CBS:} Implicit Conflict-Based Search Using Lazy Clause
  Generation.
\newblock In \emph{{ICAPS}}, 155--162. {AAAI} Press.

\bibitem[{Geft and Halperin(2022)}]{DBLP:conf/atal/GeftH22}
Geft, T.; and Halperin, D. 2022.
\newblock Refined Hardness of Distance-Optimal Multi-Agent Path Finding.
\newblock In \emph{{AAMAS}}, 481--488. International Foundation for Autonomous
  Agents and Multiagent Systems {(IFAAMAS)}.

\bibitem[{Gordon, Filmus, and Salzman(2021)}]{gordon2021revisiting}
Gordon, O.; Filmus, Y.; and Salzman, O. 2021.
\newblock Revisiting the Complexity Analysis of Conflict-Based Search: New
  Computational Techniques and Improved Bounds.
\newblock In \emph{{SOCS}}, 64--72. {AAAI} Press.

\bibitem[{Kaduri, Boyarski, and Stern(2020)}]{DBLP:conf/aips/KaduriBS20}
Kaduri, O.; Boyarski, E.; and Stern, R. 2020.
\newblock Algorithm Selection for Optimal Multi-Agent Pathfinding.
\newblock In \emph{{ICAPS}}, 161--165. {AAAI} Press.

\bibitem[{Knuth and Raghunathan(1992)}]{DBLP:journals/siamdm/KnuthR92}
Knuth, D.~E.; and Raghunathan, A. 1992.
\newblock The Problem of Compatible Representatives.
\newblock \emph{{SIAM} J. Discrete Math.}, 5(3): 422--427.

\bibitem[{Lam et~al.(2019)Lam, Bodic, Harabor, and Stuckey}]{BCP}
Lam, E.; Bodic, P.~L.; Harabor, D.~D.; and Stuckey, P.~J. 2019.
\newblock Branch-and-Cut-and-Price for Multi-Agent Pathfinding.
\newblock In \emph{{IJCAI}}, 1289--1296. ijcai.org.

\bibitem[{Li(2022)}]{LiPhD22}
Li, J. 2022.
\newblock \emph{Efficient and Effective Techniques for Large-Scale Multi-Agent
  Path Finding}.
\newblock Ph.D. thesis, University of Southern California.

\bibitem[{Li et~al.(2019)Li, Harabor, Stuckey, Ma, and
  Koenig}]{DBLP:conf/aaai/LiHS0K19}
Li, J.; Harabor, D.; Stuckey, P.~J.; Ma, H.; and Koenig, S. 2019.
\newblock Symmetry-Breaking Constraints for Grid-Based Multi-Agent Path
  Finding.
\newblock In \emph{The Thirty-Third {AAAI} Conference on Artificial
  Intelligence, {AAAI} 2019}, 6087--6095. {AAAI} Press.

\bibitem[{Li et~al.(2020)Li, Tinka, Kiesel, Durham, Kumar, and
  Koenig}]{app:warehouses}
Li, J.; Tinka, A.; Kiesel, S.; Durham, J.~W.; Kumar, T. K.~S.; and Koenig, S.
  2020.
\newblock Lifelong Multi-Agent Path Finding in Large-Scale Warehouses.
\newblock In \emph{{AAMAS}}, 1898--1900. International Foundation for
  Autonomous Agents and Multiagent Systems.

\bibitem[{Lichtenstein(1982)}]{DBLP:journals/siamcomp/Lichtenstein82}
Lichtenstein, D. 1982.
\newblock Planar Formulae and Their Uses.
\newblock \emph{{SIAM} J. Comput.}, 11(2): 329--343.

\bibitem[{Ma and Koenig(2016)}]{TAPF}
Ma, H.; and Koenig, S. 2016.
\newblock Optimal Target Assignment and Path Finding for Teams of Agents.
\newblock In \emph{{AAMAS}}, 1144--1152. {ACM}.

\bibitem[{Ma et~al.(2016)Ma, Tovey, Sharon, Kumar, and
  Koenig}]{DBLP:conf/aaai/MaTSKK16}
Ma, H.; Tovey, C.~A.; Sharon, G.; Kumar, T. K.~S.; and Koenig, S. 2016.
\newblock Multi-Agent Path Finding with Payload Transfers and the
  Package-Exchange Robot-Routing Problem.
\newblock In \emph{{AAAI}}, 3166--3173. {AAAI} Press.

\bibitem[{Salzman and Stern(2020)}]{DBLP:conf/atal/SalzmanS20}
Salzman, O.; and Stern, R. 2020.
\newblock Research Challenges and Opportunities in Multi-Agent Path Finding and
  Multi-Agent Pickup and Delivery Problems.
\newblock In \emph{{AAMAS}}, 1711--1715. International Foundation for
  Autonomous Agents and Multiagent Systems.

\bibitem[{Sharon et~al.(2015)Sharon, Stern, Felner, and
  Sturtevant}]{DBLP:journals/ai/SharonSFS15}
Sharon, G.; Stern, R.; Felner, A.; and Sturtevant, N.~R. 2015.
\newblock Conflict-based search for optimal multi-agent pathfinding.
\newblock \emph{Artif. Intell.}, 219: 40--66.

\bibitem[{Silver(2005)}]{DBLP:conf/aiide/Silver05}
Silver, D. 2005.
\newblock Cooperative Pathfinding.
\newblock In \emph{{AIIDE}}, 117--122. {AAAI} Press.

\bibitem[{Solovey and Halperin(2014)}]{DBLP:journals/ijrr/SoloveyH14}
Solovey, K.; and Halperin, D. 2014.
\newblock \emph{k}-color multi-robot motion planning.
\newblock \emph{Int. J. Robotics Res.}, 33(1): 82--97.

\bibitem[{Stern et~al.(2019)Stern, Sturtevant, Felner, Koenig, Ma, Walker, Li,
  Atzmon, Cohen, Kumar, Bart{\'{a}}k, and
  Boyarski}]{DBLP:conf/socs/SternSFK0WLA0KB19}
Stern, R.; Sturtevant, N.~R.; Felner, A.; Koenig, S.; Ma, H.; Walker, T.~T.;
  Li, J.; Atzmon, D.; Cohen, L.; Kumar, T. K.~S.; Bart{\'{a}}k, R.; and
  Boyarski, E. 2019.
\newblock Multi-Agent Pathfinding: Definitions, Variants, and Benchmarks.
\newblock In \emph{{SOCS}}, 151--159. {AAAI} Press.

\bibitem[{Surynek(2010)}]{DBLP:conf/aaai/Surynek10}
Surynek, P. 2010.
\newblock An Optimization Variant of Multi-Robot Path Planning Is Intractable.
\newblock In \emph{{AAAI}}. {AAAI} Press.

\bibitem[{Surynek(2019)}]{SMT-CBS}
Surynek, P. 2019.
\newblock Unifying Search-based and Compilation-based Approaches to Multi-agent
  Path Finding through Satisfiability Modulo Theories.
\newblock In \emph{{IJCAI}}, 1177--1183. ijcai.org.

\bibitem[{Yu(2016)}]{YuPlanar}
Yu, J. 2016.
\newblock Intractability of Optimal Multirobot Path Planning on Planar Graphs.
\newblock \emph{{IEEE} Robotics Autom. Lett.}, 1(1): 33--40.

\bibitem[{Yu and LaValle(2012)}]{YuUnlabeledMAPF}
Yu, J.; and LaValle, S.~M. 2012.
\newblock Multi-agent Path Planning and Network Flow.
\newblock In \emph{{WAFR}}, volume~86 of \emph{Springer Tracts in Advanced
  Robotics}, 157--173. Springer.

\bibitem[{Yu and LaValle(2013)}]{YuGraphs}
Yu, J.; and LaValle, S.~M. 2013.
\newblock Structure and Intractability of Optimal Multi-Robot Path Planning on
  Graphs.
\newblock In \emph{{AAAI}}. {AAAI} Press.

\end{thebibliography}

\end{document}